\newtheorem{theorem}{Theorem}
\begin{document}


\title{Quantum Error Correcting Code for Ternary Logic}

\author{Ritajit Majumdar$^1$} 
\email{majumdar.ritajit@gmail.com}
\author{Saikat Basu$^2$}
\author{Shibashis Ghosh$^2$}
\author{Susmita Sur-Kolay$^{1}$}
\email{ssk@isical.ac.in}

\affiliation{$^{1}$Advanced Computing \& Microelectronics Unit, Indian Statistical Institute, India\\
 $^{2}$A. K. Choudhury School of Information Technology, University of Calcutta, India}


\begin{abstract}
Ternary quantum systems are being studied because these provide more computational state space per unit of information, known as qutrit. A qutrit has three basis states, thus a qubit may be considered as a special case of a qutrit where the coefficient of one of the basis states is zero. Hence both $(2 \times 2)$-dimensional as well as $(3 \times 3)$-dimensional Pauli errors can occur on qutrits. In this paper, we (i) explore the possible $(2 \times 2)$-dimensional as well as $(3 \times 3)$-dimensional Pauli errors in qutrits and show that any pairwise bit swap error can be expressed as a linear combination of shift errors and phase errors, (ii) propose a new type of error called quantum superposition error and show its equivalence to arbitrary rotation, (iii) formulate a nine qutrit code which can correct a single error in a qutrit, and (iv) provide its stabilizer and circuit realization.
\end{abstract}

\pacs{Valid PACS appear here}
\maketitle


\section{\label{sec:intro}Introduction}

Quantum computers hold the promise of reducing the computational complexity of certain problems. However, quantum systems are highly sensitive to errors; even interaction with environment can cause a change of state. While descriptions of quantum algorithms, communication protocols, etc. assume the existence of closed quantum system, in reality such systems are open. Hence error correction is of utmost importance for quantum computation.

Multi-valued quantum computing based on $d$-dimensional quantum systems is gaining importance particularly in the field of quantum cryptography as it can represent a larger state space using less quantum resources. For example, in ion trap technology, if $d$ levels of the ion are used, then the number of qudits required can be reduced by a factor of $log_{2}d$ \cite{PhysRevA.62.052309}. In \cite{fan2008applications}, the author showed that  multi-valued quantum logic in some quantum algorithms outperforms its binary counterpart. Furthermore, recent researches on quantum algorithms have used lackadaisical quantum walk \cite{1751-8121-48-43-435304} with multi-dimensional coin to solve search problems faster than their classical counterpart \cite{10.1007/978-3-662-49192-8_31, wang2017adjustable}. The simplest higher dimensional quantum system is for $d = 3$.  Muthukrishnan and Stroud \cite{PhysRevA.62.052309} have implemented a few $(3 \times 3)$ quantum ternary logic gates with ion trap. 



A general ternary quantum state (qutrit) is represented as $\ket{\psi} = \alpha\ket{0} + \beta\ket{1} + \gamma\ket{2}$, where $\alpha, \beta, \gamma \in \mathbb{C}$ and $|\alpha|^2 + |\beta|^2 + |\gamma|^2 = 1$. For practical implementation of quantum ternary system in noisy environment, it is essential that one is able to correct errors in the system. In literature, 9-qubit \cite{PhysRevA.52.R2493}, 7-qubit \cite{PhysRevLett.77.793} and 5-qubit \cite{PhysRevLett.77.198} codes are available for error correction in qubits.  However, to the best of our knowledge, no error correcting code has been provided for ternary quantum systems. Previous attempts have been made for higher dimensional error correction \cite{Gottesman1999, PhysRevA.55.R839, PhysRevA.64.012310, PhysRevA.77.032309, PhysRevA.86.022308} but they have either considered $(2 \times 2)$-dimensional errors or $(d \times d)$-dimensional errors only and no explicit circuit has been presented.

{\it Main Contributions:} In this paper, we study error correction in qutrits considering both $(2 \times 2)$-dimensional as well as $(3 \times 3)$-dimensional errors. We have introduced a new type of error where a basis state is mapped to any arbitrary superposition state. We have called it \emph{quantum superposition error}. We have shown that such an error can be represented as an arbitrary diagonalizable rotation matrix. We have proposed an error correcting code that can correct these $(3 \times 3)$-dimensional as well as $(2 \times 2)$-dimensional errors affecting a qutrit.
Moreover, for qubits, a phase error in \{$\ket{+},\ket{-}$\} basis is analogous to a bit error in $\{\ket{0},\ket{1}\}$ basis, where $\ket{+} = \frac{1}{\sqrt{2}}(\ket{0} + \ket{1})$ and $\ket{-} = \frac{1}{\sqrt{2}}(\ket{0} - \ket{1})$. 
The \{$\ket{+},\ket{-}$\} basis is obtained by the Hadamard transform of $\{\ket{0},\ket{1}\}$ basis. However, in a qutrit, since there are three basis states, ternary QFT does not have similar properties as that of Hadamard in a qubit system. QFT acting on the states $\ket{0}, \ket{1}$, and $\ket{2}$ produces the states $\ket{+} = \frac{1}{\sqrt{3}}(\ket{0} + \ket{1} + \ket{2})$, $\ket{-} = \frac{1}{\sqrt{3}}(\ket{0} + \omega^2\ket{1} + \omega\ket{2})$ and $\ket{|} = \frac{1}{\sqrt{3}}(\ket{0} + \omega\ket{1} + \omega^2\ket{2})$ respectively. Hence the bit flip nature of \{$\ket{+},\ket{-}$\} basis for qubits does not translate directly in three-dimensional systems. This problem has not been addressed earlier for higher dimensional quantum error correction. We have proposed a mechanism to correct phase errors in qutrits. Moreover, we provide stabilizer and circuit realization of the error correction mechanism. We also show that similar to qubit system \cite{PhysRevLett.77.198} five qutrits are necessary to correct a single error in a qutrit. 


In Section II, we provide the possible types of pure state qutrit errors. We consider $(2 \times 2)$-dimensional pairwise swap error and $(3 \times 3)$-dimensional errors which we call \emph{shift} errors. Section III presents phase errors due to decoherence. We have also introduced the notion of a quantum superposition error, and shown its equivalence to an arbitrary phase error. In Section IV, we depict the basic error model  which captures all pure state errors, phase errors and arbitrary diagonalizable rotation. In Section V, we provide our proposed 9-qutrit code for correcting a single error in a qutrit. Section VI shows the stabilizer and circuit realizations of the error correction. We conclude in Section VII.

\section{\label{sec:bit}Bit errors in qutrits}
Our mathematical model of a qutrit, as mentioned in Section~\ref{sec:intro}, is in \{$\ket{0}, \ket{1}, \ket{2}$\} basis where $\ket{0} = \begin{pmatrix}
1 & 0 & 0
\end{pmatrix}^T$, $\ket{1} = \begin{pmatrix}
0 & 1 & 0
\end{pmatrix}^T$ and $\ket{2} = \begin{pmatrix}
0 & 0 & 1
\end{pmatrix}^T$. Hence a general error-free qutrit is of the form $\ket{\psi} = \alpha\ket{0} + \beta\ket{1} + \gamma\ket{2}$ where $\alpha, \beta, \gamma \in \mathbb{C}$ and $|\alpha|^2 + |\beta|^2 + |\gamma|^2 = 1$. We first consider bit \emph{pairwise swap} errors which are $(2 \times 2)$-dimensional. Then we discuss ternary bit errors which we call \emph{shift} errors.

\subsection{Ternary \emph{pairwise swap} errors}
It is possible to have a $(2 \times 2)$-dimensional error in a three-dimensional system. A pairwise pure state swap or bit flip error in a qubit \cite{nielsen2010quantum} is represented  by the Pauli matrix $\sigma_x = \begin{pmatrix}
0 & 1\\
1 & 0
\end{pmatrix}$. In a ternary system, there can be three types of $(2 \times 2)$-dimensional pairwise swap, namely $X_{01}, X_{12}, X_{20}$. A single qutrit pairwise swap error operates only on any two of the three basis states, i.e., the amplitudes of two out of the three basis states get swapped in the presence of such an error. The matrices corresponding to these errors are as follows:

\begin{itemize}
\item $X_{01}\ket{\psi} = \alpha\ket{1} + \beta\ket{0} + \gamma\ket{2}$.
\begin{center}
$X_{01} = \begin{pmatrix}
0 & 1 & 0\\
1 & 0 & 0\\
0 & 0 & 1
\end{pmatrix}$
\end{center}
\item $X_{12}\ket{\psi} = \alpha\ket{0} + \beta\ket{2} + \gamma\ket{1}$.
\begin{center}
$X_{12} = \begin{pmatrix}
1 & 0 & 0\\
0 & 0 & 1\\
0 & 1 & 0
\end{pmatrix}$
\end{center}
\item $X_{20}\ket{\psi} = \alpha\ket{2} + \beta\ket{1} + \gamma\ket{0}$.
\begin{center}
$X_{20} = \begin{pmatrix}
0 & 0 & 1\\
0 & 1 & 0\\
1 & 0 & 0
\end{pmatrix}$
\end{center}
\end{itemize}

These matrices are \emph{self-adjoint}. Hence if any of these errors occur, applying the same error matrix again can correct it.

\subsection{Ternary bit \emph{shift} errors}
In addition to bit flip errors, purely ternary errors are possible which we call (cyclic) \emph{shift errors}. Two types of shifts are possible - clockwise shift ($0 \rightarrow 1 \rightarrow 2$) and anticlockwise shift ($0 \leftarrow 1 \leftarrow 2$). The mathematical formulation of \emph{clockwise shift} ($X_1$) is $\ket{j} \xrightarrow{X_1} \ket{j + 1}$ mod 3 and that of \emph{anticlockwise shift} ($X_2$) is $\ket{j} \xrightarrow{X_2} \ket{j - 1}$ mod 3. It is interesting to note that the stabilizer proposed by Gottesman in \cite{Gottesman1999} for higher dimensional errors correspond to the \emph{clockwise shift} ($X_1$).

It can be easily verified that the respective matrices corresponding to errors $X_1$ and $X_2$ are -

\begin{center}
	\begin{tabular}{ c  c  c}
		$X_1 = \begin{pmatrix}
		0 & 0 & 1\\
		1 & 0 & 0\\
		0 & 1 & 0
		\end{pmatrix}$
		&
		\hspace{4mm}
		&
		$X_2 = \begin{pmatrix}
		0 & 1 & 0\\
		0 & 0 & 1\\
		1 & 0 & 0
		\end{pmatrix}$
	\end{tabular}
\end{center}

The action of these cyclic shift errors on the error-free state $\ket{\psi}$ can be mathematically represented as\\

$X_1\ket{\psi} = \alpha\ket{1} + \beta\ket{2} + \gamma\ket{0}$.\\

$X_2\ket{\psi} = \alpha\ket{2} + \beta\ket{0} + \gamma\ket{1}$.\\

The matrices for \emph{shift} errors are not self-adjoint. However, each type of shift error occurring twice in succession produces the other type of shift error, i.e., $X_2 = X_1^2$ and $X_1 = X_2^2$; further $X_1^{-1} = X_2$ and $X_2^{-1} = X_1$. Thus to correct an $X_1$ error, one can apply $X_2$ and vice-versa. 

It can be checked that any combination of these five types of errors (pairwise swap and shift) results in one of these five errors or identity. Thus these five errors exhaust the list of possible bit errors in a qutrit.

\section{\label{sec:phase}Phase errors in qutrits}
\subsection{Arbitrary rotation and phase}
Qutrits are unit vectors in three-dimensional Hilbert Space $\mathcal{H}^{\otimes 3}$. On interaction with the environment, a qutrit can undergo rotation by some arbitrary angles, where the basis states $\ket{1}$ and $\ket{2}$ may incur different phase errors given by $e^{i\theta}$ and $e^{i\phi}$ respectively. Such an error changes the error- free state $\ket{\psi}$ as
\begin{center}
    $\alpha\ket{0} + \beta\ket{1} + \gamma\ket{2} \rightarrow  \alpha\ket{0} + \beta e^{i\theta}\ket{1} + \gamma e^{i\phi}\ket{2}$.
\end{center}

The corresponding error operator is denoted as
\begin{center}
$R_{\theta\phi} = \begin{pmatrix}
1 & 0 & 0\\
0 & e^{i\theta} & 0\\
0 & 0 & e^{i\phi}
\end{pmatrix}$.
\end{center}

Using the formula $e^{\pm i\theta} = cos\theta \pm i sin\theta$, the error matrix can be represented upto a global phase of $e^{i\frac{\theta + \phi}{2}}$, as

\begin{center}
[$cos\frac{\theta}{2}\begin{pmatrix}
1 & 0 & 0\\
0 & 1 & 0\\
0 & 0 & 1
\end{pmatrix} - isin\frac{\theta}{2}\begin{pmatrix}
1 & 0 & 0\\
0 & -1 & 0\\
0 & 0 & 1
\end{pmatrix}$]
$\cdot $ [$cos\frac{\phi}{2}\begin{pmatrix}
1 & 0 & 0\\
0 & 1 & 0\\
0 & 0 & 1
\end{pmatrix} - isin\frac{\phi}{2}\begin{pmatrix}
1 & 0 & 0\\
0 & 1 & 0\\
0 & 0 & -1
\end{pmatrix}$]\\
\begin{equation}\label{eq1}
\begin{split}
= cos\frac{\theta}{2}cos\frac{\phi}{2}\mathbb{I} - isin\frac{\theta}{2}cos\frac{\phi}{2}Z_1 - icos\frac{\theta}{2}sin\frac{\phi}{2}Z_2 \\
- sin\frac{\theta}{2}sin\frac{\phi}{2}Z_{12}.
\end{split}
\end{equation}
\end{center}
 where

\begin{itemize}
\item $Z_1 = \begin{pmatrix}
1 & 0 & 0\\
0 & -1 & 0\\
0 & 0 & 1
\end{pmatrix}$\\
$\alpha\ket{0} + \beta\ket{1} + \gamma\ket{2} \xrightarrow{Z_1} \alpha\ket{0} - \beta\ket{1} + \gamma\ket{2}$.
\item $Z_2 = \begin{pmatrix}
1 & 0 & 0\\
0 & 1 & 0\\
0 & 0 & -1
\end{pmatrix}$\\
$\alpha\ket{0} + \beta\ket{1} + \gamma\ket{2} \xrightarrow{Z_2} \alpha\ket{0} + \beta\ket{1} - \gamma\ket{2}$.
\item $Z_{12} = \begin{pmatrix}
1 & 0 & 0\\
0 & -1 & 0\\
0 & 0 & -1
\end{pmatrix}$ = $Z_1Z_2 = Z_2Z_1$\\
$\alpha\ket{0} + \beta\ket{1} + \gamma\ket{2} \xrightarrow{Z_{12}} \alpha\ket{0} - \beta\ket{1} - \gamma\ket{2}$.
\end{itemize}

Correction of phase error is not so obvious in ternary systems as in binary systems. This is because phase flip error has the nature of bit flip error in the Hadamard basis ($\ket{+} = \frac{1}{\sqrt{2}}(\ket{0} + \ket{1})$, $\ket{-} = \frac{1}{\sqrt{2}}(\ket{0} - \ket{1})$). However, the natural extension of Hadamard transform, i.e., Quantum Fourier Transform (QFT) does not behave in a similar way for qutrits. We address this problem and propose a solution in Section~\ref{sec:stab}.

The phase of the three basis states are defined \cite{Gottesman1999} in terms of $\omega$, the cube root of unity, i.e., $\omega^3 = 1$, and $1 + \omega + \omega^2 = 0$. The two phase error matrices for $\ket{0}$ are given by

\begin{center}
	\begin{tabular}{ c  c  c}
		$R_1 = \begin{pmatrix}
		1 & 0 & 0\\
		0 & \omega & 0\\
		0 & 0 & \omega^2
		\end{pmatrix}$
		&
		\hspace{4mm}
		&
		$R_2 = \begin{pmatrix}
		1 & 0 & 0\\
		0 & \omega^2 & 0\\
		0 & 0 & \omega
		\end{pmatrix}$.
	\end{tabular}
\end{center}



It can be seen that the matrices $R_1$ and $R_2$ are special cases of the rotation matrix $R_{\theta\phi}$. So they are not independent errors. However, in Section~\ref{sec:stab}, we have used $\omega$ and $\omega^2$ phases to show that a binary bit pairwise swap error can be represented as a linear combination of ternary errors. Hence, we explicitly show their correction in Section~\ref{sec:stab}.

In the following part, we consider a special type of rotation, where each basis state is knocked into superposition of two or three basis states. We call it quantum superposition error, and describe it in details in the following subsection.
 
\subsection{Quantum superposition error}
We consider a special quantum error where the decoherence \emph{knocks} a basis state into a superposition of two (or three) basis states, i.e.,

\begin{center}
$\ket{i} \rightarrow \alpha\ket{j} + \beta\ket{k}$ where \{$i,j,k$\} $\in \{0,1,2\}$ and $j \ne k$.
\end{center}

The criterion $j \ne k$ ensures that a qutrit is not mapped to a qubit. Such an error can induce  the following:
\begin{eqnarray*}
\ket{0} & \rightarrow & \alpha_0\ket{j_0} + \beta_0\ket{k_0}.\\
\ket{1} & \rightarrow & \alpha_1\ket{j_1} + \beta_1\ket{k_1}.\\
\ket{2} & \rightarrow & \alpha_2\ket{j_2} + \beta_2\ket{k_2}.
\end{eqnarray*}

where \{$j_0, k_0, j_1, k_1, j_2, k_2$\} $\in$ \{0, 1, 2\} and the criterion ($j_l \ne k_l$ for $l\in \{0,1,2\}$) holds in each case. We call this \emph{quantum superposition error}.

\begin{theorem}
A quantum superposition error is equivalent to arbitrary diagonalizable rotation.
\end{theorem}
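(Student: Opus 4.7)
The plan is to reduce the theorem to a spectral-theoretic statement about the $3 \times 3$ matrix that represents the quantum superposition error, and then to observe that its diagonal form matches precisely the rotation/phase operator $R_{\theta\phi}$ already discussed in the paper.

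First I would write the error as a matrix $M$ whose three columns are the images of $\ket{0}, \ket{1}, \ket{2}$: each column has exactly two non-zero entries, namely the coefficients $\alpha_\ell, \beta_\ell$ placed in rows $j_\ell, k_\ell$ with $j_\ell \neq k_\ell$. Because a quantum superposition error is a physical transformation that must preserve inner products and norms of qutrit states, $M$ is constrained to be unitary; in particular the per-column normalizations $|\alpha_\ell|^2 + |\beta_\ell|^2 = 1$ must hold simultaneously with pairwise orthogonality of the columns, given the prescribed supports $\{j_\ell, k_\ell\}$.

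Next I would invoke the spectral theorem for normal matrices: since $M$ is unitary it is normal, hence unitarily diagonalizable with unit-modulus eigenvalues. So there exist a unitary $U$ and a diagonal matrix $D = \mathrm{diag}(e^{i\lambda_0}, e^{i\lambda_1}, e^{i\lambda_2})$ with $M = U D U^\dagger$. Up to the unobservable global phase one is free to factor out, $D$ has precisely the form of $R_{\theta\phi}$ from Eq.~(\ref{eq1}). Therefore $M$ is nothing but a diagonal rotation written in the eigenbasis of $U$, which is exactly what the paper means by an \emph{arbitrary diagonalizable rotation}. The converse direction --- that any such rotation can be realised by a suitable choice of supports and amplitudes --- follows by expanding $U D U^\dagger$ back into the computational basis and reading off its columns.

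The step I expect to be most delicate is pinning down the compatibility of unitarity with the sparsity pattern $j_\ell \neq k_\ell$ in every column: one must check that the orthonormality constraints between the three two-term columns still leave enough freedom for the eigenphases $\lambda_0, \lambda_1, \lambda_2$ to range over the full parameter space, so that the set of quantum superposition errors really covers every diagonalizable $3 \times 3$ rotation rather than a restricted subfamily. Once the unitarity book-keeping is handled, the spectral decomposition and the identification with $R_{\theta\phi}$ are essentially automatic.
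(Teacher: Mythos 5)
Your spectral route proves a different and weaker statement than the one the paper actually needs, and the difference is fatal. The spectral theorem gives $M = UDU^\dagger$ with $D$ diagonal in the \emph{eigenbasis of $M$}, but ``arbitrary diagonalizable rotation'' in this paper means the specific operator $R_{\theta\phi}$, diagonal in the \emph{computational basis} $\{\ket{0},\ket{1},\ket{2}\}$: that is precisely what allows the error to be expanded via Eq.~(\ref{eq1}) into the correctable set $\{\mathbb{I}, Z_1, Z_2, Z_{12}\}$ and what the syndrome-extraction machinery of Sections~\ref{sec:code} and~\ref{sec:stab} relies on. Since $UDU^\dagger\ket{\psi}$ is not a global phase times $D\ket{\psi}$ for a generic $U$, and $U$ is unknown and error-dependent, exhibiting $M$ as diagonal in its own eigenbasis does not identify it with $R_{\theta\phi}$ nor make it correctable by the paper's scheme --- you cannot conjugate away $U$ in the correction procedure. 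The paper's own proof takes a state-level route instead: it writes the corrupted state as $\ket{\phi} = a'\ket{0} + b'\ket{1} + c'\ket{2}$, represents the new amplitudes as $a' = e^{ia}\alpha$, $b' = e^{ib}\beta$, $c' = e^{ic}\gamma$, and factors out the global phase $e^{ia}$ to land exactly on $R_{\theta\phi}$ with $\theta = b-a$, $\phi = c-a$ in the computational basis. (That step itself tacitly assumes $|a'| = |\alpha|$ etc., so the amplitudes acquire only phases --- but that is the reduction the theorem is built on, and your argument supplies no substitute for it.)

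The step you flag as delicate also fails in the direction you assert. Under the constraint that every column of $M$ has exactly two nonzero entries ($j_\ell \neq k_\ell$), unitarity is highly restrictive: if the images of $\ket{0}$ and $\ket{1}$ are both supported on rows $\{0,1\}$, orthonormality forces them to span that coordinate plane, so the image of $\ket{2}$ must be a single standard basis vector, contradicting the two-term requirement. Quantum superposition errors therefore form a thin, constrained subfamily of the $3 \times 3$ unitaries and do not ``cover every diagonalizable rotation''; the converse inclusion you sketch is false as stated (fortunately the theorem only needs the forward direction, but your proposed proof of it explicitly leans on this coverage claim). To repair the argument you would need to show, as the paper does, that the error's action on the stored state reduces to per-basis-state phases in the computational basis, not merely that $M$ is unitarily diagonalizable --- the latter is true of every unitary and carries no error-correcting content here.
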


\begin{proof}
Since the error free state was $\ket{\psi} = \alpha\ket{0} + \beta\ket{1} + \gamma\ket{2}$, the error state can be written as $\ket{\phi} = a'\ket{0} + b'\ket{1} + c'\ket{2}$ by summing up the new amplitudes, where $|a'|^2 + |b'|^2 + |c'|^2 = 1$ and $a', b', c' \in \mathbb{C}$.

Since $|e^{i\lambda}| = 1$ for any $\lambda \in \mathbb{R}$, the new amplitudes $a'$, $b'$, $c'$ can be represented as $a' = e^{ia}\alpha$, $b' = e^{ib}\beta$ and $c' = e^{ic}\gamma$ for some $a, b, c \in \mathbb{R}$. So the state $\ket{\phi}$ can be written as $\ket{\phi} = e^{ia}\alpha\ket{0} + e^{ib}\beta\ket{1} + e^{ic}\gamma\ket{2}$
$\approx \alpha\ket{0} + e^{i(b - a)}\beta\ket{1} + e^{i(c - a)}\gamma\ket{2}$ up to a global phase of $e^{ia}$. Putting $b - a = \theta$ and $c - a = \phi$, this error can be represented by the error matrix $R_{\theta\phi}$ and can be corrected in a similar way.

The proof is similar for the case where a basis state is mapped to a superposition of three basis states i.e. $a_0\ket{0} + a_1\ket{1} + a_2\ket{2}$.
\end{proof}

It has already been shown that $Z_{12} = Z_1Z_2$. One can check that any combination of $Z_1, Z_2$ and $Z_{12}$ results in one of the three errors $Z_1, Z_2$, $Z_{12}$ or identity. Moreover, arbitrary phase and quantum superposition error can also be represented as combinations of these three. Hence, these three errors exhaust the list of possible phase errors in a qutrit.

\section{Error model for qutrits}
\label{sec:model}
The errors considered in Sections~\ref{sec:bit} and ~\ref{sec:phase} can be summed up using the error model proposed in \cite{PhysRevA.55.R839}. Any error $E$ acting on the qutrit is a $3 \times 3$ complex unitary matrix, $E \in \mathbb{C}^{3 \times 3}$. Such a matrix can be written as a linear combination of bit errors ($X$), phase errors ($Z$) \cite{PhysRevA.55.R839} and their product ($Y = iZX$). The general error model considered in this paper is

\begin{equation} \label{eq:model}
E = a\mathbb{I}_3 + \sum_{i = 1}^2b_iZ_i + \sum_{\substack{m, n = 0 \\ m \neq n}}^2(c_{mn}X_{mn} + \sum_{j = 1}^2d_{mnj}Y_{mnj})
\end{equation}

where $a, b, c_{mn}, d_{mnj} \in \mathbb{C}$ are constants. $\mathbb{I}_3$ is the $3 \times 3$ identity operator and $Y_{mnj} = iZ_jX_{mn}$ takes into account when both bit error ($X$) and phase error ($Z$) occur simultaneously. This error model tackles both binary as well as ternary bit and phase errors. Furthermore, it can encapsulate an arbitrary diagonalizable rotation error whose matrix representation is a diagonal matrix ($R_{\theta\phi}$) with different phases for the basis states. We consider this error model since it can capture any such error $E \in \mathbb{C}^{3 \times 3}$.

\section{\label{sec:code}Error correcting code for qutrits}

We propose a nine-qutrit error correcting code. For error correction, we encode the information of a single qutrit $\ket{\psi} = \alpha\ket{0} + \beta\ket{1} + \gamma\ket{2}$ into nine qutrits where the logical qutrit is denoted as \\
$\ket{\psi}_L = \alpha\ket{0}_L + \beta\ket{1}_L + \gamma\ket{2}_L$.

The logical $\ket{0}_L$, $\ket{1}_L$ and $\ket{2}_L$ are as follows:

\begin{eqnarray*}
\ket{0}_L& = & \frac{1}{3\sqrt{3}}(\ket{000} + \ket{111} + \ket{222})(\ket{000} + \ket{111} + \ket{222})\\
& &(\ket{000} + \ket{111} + \ket{222}).\\
\ket{1}_L& = & \frac{1}{3\sqrt{3}}(\ket{000} + \omega\ket{111} + \omega^2\ket{222})\\
& &(\ket{000} + \omega\ket{111} + \omega^2\ket{222})\\
& &(\ket{000} + \omega\ket{111} + \omega^2\ket{222}).\\
\ket{2}_L& = & \frac{1}{3\sqrt{3}}(\ket{000} + \omega^2\ket{111} + \omega\ket{222})\\
& &(\ket{000} + \omega^2\ket{111} + \omega\ket{222})\\
& &(\ket{000} + \omega^2\ket{111} + \omega\ket{222}).
\end{eqnarray*}

It is easy to check that $\ket{0}_L$, $\ket{1}_L$ and $\ket{2}_L$ are orthogonal to each other.

In order to correct the error in Eq.(\ref{eq:model}), ancilla state(s) $\ket{\zeta}$ is entangled with the system. Finally, the ancilla state(s) is measured which gives a classical outcome called \emph{error syndrome}. The error syndrome denotes the type of error that has occurred. The resultant state after entanglement of the ancilla state(s) $\ket{\zeta}$ is a superposition of the form
\begin{center}
$cos\frac{\theta}{2}cos\frac{\phi}{2}\mathbb{I}\ket{\psi}\ket{\zeta_{I}} - isin\frac{\theta}{2}cos\frac{\phi}{2}Z_1\ket{\psi}\ket{\zeta_{Z_{1}}} - icos\frac{\theta}{2}sin\frac{\phi}{2}Z_2\ket{\psi}\ket{\zeta_{Z_{2}}} - sin\frac{\theta}{2}sin\frac{\phi}{2}Z_{12}\ket{\psi}\ket{\zeta_{Z_{12}}}$
\end{center}

where $\ket{\zeta_i}$ indicates the ancilla qubit with $i$-th error syndrome. Upon measurement of the ancilla qubits, the superposition collapses. If the ancilla state collapses with $i$-th syndrome, then the encoded state also collapses with the $i$-th error on the system.

\section{\label{sec:stab}Stabilizer and circuit realization}

We have considered that a ternary quantum system can be perturbed by both binary (pairwise swap of basis states) and ternary (cyclic shift of basis states) errors. However, any pairwise swap error can be written as a linear combination of ternary shift and phase errors. For example, the error operator $X_{12}$ can be written (upto a normalization factor) as a linear combination of shift and phase operators as follows:

\begin{eqnarray*}
    \begin{pmatrix}
    1 & 0 & 0\\
    0 & 0 & 1\\
    0 & 1 & 0
    \end{pmatrix}& = & \begin{pmatrix}
    1 & 0 & 0\\
    0 & 1 & 0\\
    0 & 0 & 1
    \end{pmatrix} + \begin{pmatrix}
    1 & 0 & 0\\
    0 & \omega & 0\\
    0 & 0 & \omega^2
    \end{pmatrix} + \begin{pmatrix}
    1 & 0 & 0\\
    0 & \omega^2 & 0\\
    0 & 0 & \omega
    \end{pmatrix}\\
    & + & \begin{pmatrix}
    0 & 1 & 0\\
    0 & 0 & 1\\
    1 & 0 & 0
    \end{pmatrix} + \begin{pmatrix}
    0 & \omega & 0\\
    0 & 0 & 1\\
    \omega^2 & 0 & 0
    \end{pmatrix} + \begin{pmatrix}
    0 & \omega^2 & 0\\
    0 & 0 & 1\\
    \omega & 0 & 0
    \end{pmatrix}\\
    & + & \begin{pmatrix}
    0 & 0 & 1\\
    1 & 0 & 0\\
    0 & 1 & 0
    \end{pmatrix} + \begin{pmatrix}
    0 & 0 & \omega\\
    1 & 0 & 0\\
    0 & \omega^2 & 0
    \end{pmatrix} + \begin{pmatrix}
    0 & 0 & \omega^2\\
    1 & 0 & 0\\
    0 & \omega & 0
    \end{pmatrix}.
\end{eqnarray*}

Note that the error operators with angles $\omega$ and $\omega^2$ can be considered as $R_{\theta\phi}$ error for particular values of $\theta$ and $\phi$. Hence, a code which can correct shift and phase errors can also correct pairwise swap errors occurring on qutrits. Henceforth, we shall focus only on the correction of bit shift errors and phase errors.

\subsection{Stabilizer structure for error detection}

A set of operators $M_1, M_2, \hdots, M_k$ are called stabilizers of a quantum state $\ket{\psi}$ if $M_i\ket{\psi} = \ket{\psi}$ for all $i$. If the state $\ket{\psi}$, upon incurring some error, is changed to $\ket{\psi_E}$, then at least for one of the stabilizers $M_i$, the error state will be a -1 eigenstate, i.e., $M_i\ket{\psi_E} = -\ket{\psi_E}$ for at least one $i$. For ternary systems, the error states may be $\omega$-eigenstate or $\omega^2$-eigenstate of the stabilizers.

Gottesman defined stabilizers for higher dimensional spin systems as 

\begin{center}
    $X\ket{j} = \ket{j+1}$ mod d \hspace*{0.7cm} $R\ket{j} = \omega^j\ket{j}$
\end{center}

where $d$ is the dimension of the quantum state. For qutrit systems, $d = 3$. The stabilizers for detection of bit and phase errors are

\begin{center}
    $RRRIIIIII$, $IIIRRRIII$, $IIIIIIRRR$\\
    $XXXIIIIII$, $IIIXXXIII$, $IIIIIIXXX$.
\end{center}

The first three stabilizers, including $R$ and identity ($I$) only check shift errors. The last three stabilizers, including $X$ and identity only check the errors with $\omega$, $\omega^2$ phase. We show in Table~\ref{tab:stab} the eigenvalues corresponding to stabilizers $RRRIIIIII$ and $XXXIIIIII$ for different error states. The actions of the other stabilizers are similar. However, the stabilizers for shift errors can only detect the presence and the type of error, but not their location. Hence, a second step is required to find the location of the errors.

\begin{table}[]
    \centering
    \caption{Stabilizers for ternary errors}
    \begin{tabular}{| c | c | c |}
    \hline
        Error state & $RRRIIIIII$ & $XXXIIIIII$ \\
        \hline
        $\ket{000} + \ket{111} + \ket{222}$ & $1$ & $1$\\
        \hline
        $\ket{200} + \ket{011} + \ket{122}$ & $\omega^2$ & \\
        $\ket{020} + \ket{101} + \ket{212}$ & $\omega^2$ & \\
        $\ket{002} + \ket{110} + \ket{221}$ & $\omega^2$ & \\
        \hline
        $\ket{100} + \ket{211} + \ket{022}$ & $\omega$ & \\
        $\ket{010} + \ket{121} + \ket{202}$ & $\omega$ & \\
        $\ket{001} + \ket{112} + \ket{220}$ & $\omega$ & \\
        \hline
        $\ket{000} + \omega\ket{111} + \omega^2\ket{222}$ & & $\omega^2$\\
        $\ket{000} + \omega^2\ket{111} + \omega\ket{222}$ & & $\omega$\\
        \hline
    \end{tabular}
    \label{tab:stab}
\end{table}

\begin{figure}
    \centering
    \includegraphics[scale=0.9]{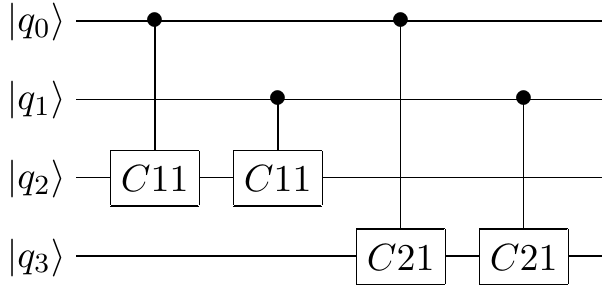}
    \caption{Circuit to compare two qutrits}
    \label{fig:block}
\end{figure}

\subsection{Error correction circuit}

After applying the stabilizer $RRRIIIIII$, if the eigenvalue is $\omega^2$, it implies that $X_2$ error has occurred in any one of the first three qutrits. However, it cannot specify the qutrit which has incurred the error. Fig~\ref{fig:block} shows the circuit which checks whether two qutrits are in the same state. In this circuit, we consider the ancilla states ($\ket{q_2}$, $\ket{q_3}$) to be \emph{qubits} only. Using qutrits as ancilla does not hamper the error correction procedure, but qutrits are not necessary. The truth table of this circuit is shown in Table~\ref{tab:truth}. The states of the ancilla qubits comprise the error syndrome. When the syndrome is 00, it implies that both the qutrits are in the same state. Any other error syndrome indicates that the qutrits are in different states.

\begin{table}[]
    \centering
    \caption{Truth table for the circuit in Fig~\ref{fig:block}}
    \begin{tabular}{| c | c | c | c |}
    \hline
        $\ket{q_0}$ & $\ket{q_1}$ & $\ket{q_2}$ & $\ket{q_3}$ \\
        \hline
        0 & 0 & 0 & 0\\
        0 & 1 & 1 & 0\\
        0 & 2 & 0 & 1\\
        1 & 0 & 1 & 0\\
        1 & 1 & 0 & 0\\
        1 & 2 & 1 & 1\\
        2 & 0 & 0 & 1\\
        2 & 1 & 1 & 1\\
        2 & 2 & 0 & 0\\
        \hline
    \end{tabular}
    \label{tab:truth}
\end{table}

The gates $C11$ and $C21$ are defined as follows:\\

$C_{11}:$ {\texttt{if}} (control = 1) {\texttt{then}}  target = target + 1\\

$C_{21}:$ {\texttt{if}}  (control = 2) {\texttt{then}}  target = target + 1

where the addition is modulo 3. \\
$C21$ is one of the basic $(3 \times 3)$ ternary quantum gates proposed and implemented in ion-trap by Muthukrishnan and Stroud \cite{PhysRevA.62.052309}. The gates proposed in \cite{PhysRevA.62.052309} are often called MS gates. Two other types of MS gates, namely MS+1 and MS+2, are defined as

\begin{center}
$MS+i\ket{j} = \ket{j+i}$ mod 3.
\end{center}

The implementation of $C11$ gate using MS gates is shown in Fig~\ref{fig:dec}.

\begin{figure}
    \centering
    \includegraphics{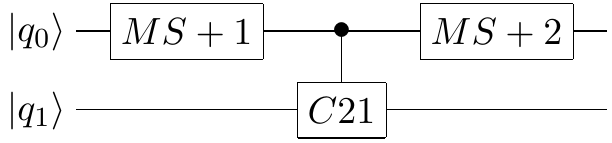}
    \caption{Realising the C11 gate with MS gates}
    \label{fig:dec}
\end{figure}

In Fig~\ref{fig:test} we show the circuit which checks whether the three qutrits are in the same state or not. In the Figure, $\ket{q_3}$ to $\ket{q_6}$ are ancilla \emph{qubits} initialized to $\ket{0}$. Following the truth table from Table~\ref{tab:truth}, if all the qutrits are in the same state, then the syndrome is 000. Otherwise one of the three bits will differ, which indicates the qutrit which has incurred error.

\begin{figure}
    \centering
    \includegraphics[scale=0.7]{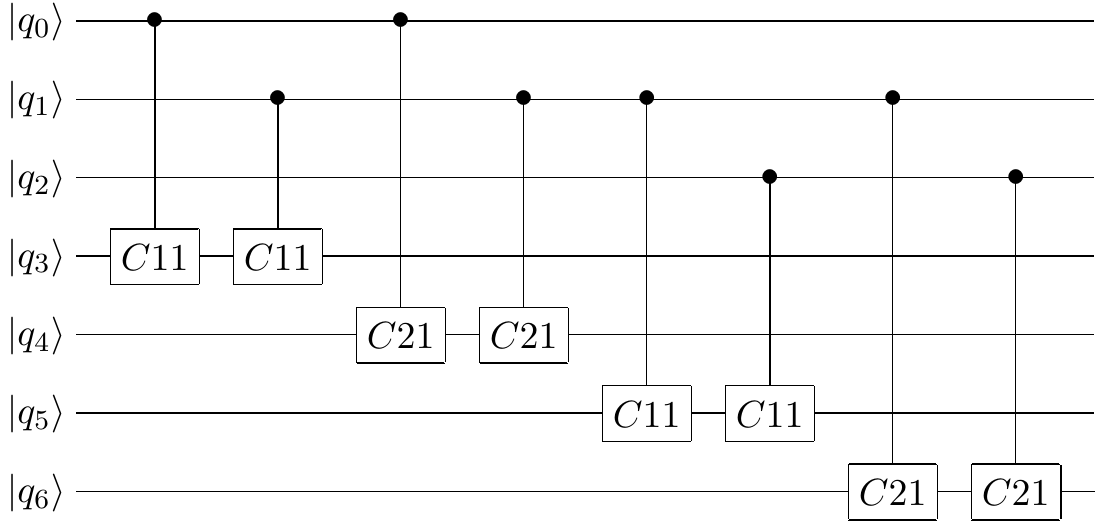}
    \caption{Circuit for qutrit error correction}
    \label{fig:test}
\end{figure}

Next, we address the problem of correction of phase errors ($Z_1$, $Z_2$ and $Z_{12}$). The \{$\ket{+}$, $\ket{-}$, $\ket{|}$\} basis for qutrits is equivalent to Hadamard basis for qubits.
\begin{eqnarray*}
\ket{+} &=& \frac{1}{\sqrt{3}}(\ket{0} + \ket{1} + \ket{2}).\\
\ket{-} &=& \frac{1}{\sqrt{3}}(\ket{0} + \omega^2\ket{1} + \omega\ket{2}).\\
\ket{|} &=& \frac{1}{\sqrt{3}}(\ket{0} + \omega\ket{1} + \omega^2\ket{2}).
\end{eqnarray*}

It is clear that the bit flip nature of Hadamard basis does not hold true in a three-dimensional system. In order to correct a phase error in qutrit, we consider the three unitary matrices as proposed in \cite{di2011elementary}:
\begin{itemize}
\item $H^{01} = \frac{1}{\sqrt{2}}\begin{pmatrix} 
1& 1 & 0\\
1 & -1 & 0\\
0 & 0 & \sqrt{2}
\end{pmatrix}$.
\item $H^{12} = \frac{1}{\sqrt{2}}\begin{pmatrix} 
\sqrt{2} & 0 & 0\\
0 & 1 & 1\\
0 & 1 & -1
\end{pmatrix}$.
\item $H^{20} = \frac{1}{\sqrt{2}}\begin{pmatrix} 
1 & 0 & 1\\
0 & \sqrt{2} & 0\\
1 & 0 & -1
\end{pmatrix}$.
\end{itemize}

These matrices are similar to Hadamard operation on two states while the third state is kept unchanged. Applying $H^{01}$ on these states changes $\ket{0}$ and $\ket{1}$ to $\ket{+}$ and $\ket{-}$ respectively, while the state $\ket{2}$ remains unchanged. Hence, if there is a phase error on $\ket{1}$ with respect to $\ket{0}$, then it can be easily detected as it will flip the states $\ket{+}$ and $\ket{-}$. Similarly, by applying $H^{20}$ any phase error between $\ket{0}$ and $\ket{2}$ can be detected. Since $Z_{12} = Z_1Z_2$, correcting $Z_1$ and $Z_2$ one after the another corrects $Z_{12}$. 

\subsection{Performance analysis}
The code proposed in this paper is a repetition code, where each of the logical qutrits ($\ket{0}_L$, $\ket{1}_L$, $\ket{2}_L$) are arranged in three blocks of three qutrits each. This approach is similar to Shor code \cite{PhysRevA.52.R2493} for qubits. If $p$ is the probability that a single qutrit is affected by decoherence, then the probability that none of the nine qutrits decohere is $(1-p)^9$. This code fails if more than one qutrit incur error. The probability that at least two qutrits have error is $1 - (1-p)^9 - 9p(1-p)^8 = 1 - (1+8p)(1-p)^8 \approx 36p^2$. Hence, when the error probability is less than $\frac{1}{36}$, this technique provides an improved method to preserve the coherence of the qutrits. The performance of our proposed code is in accordance to Shor code. However, this apparent similarity to Shor Code vanishes in the error correction process. Unlike Shor Code, the error correction is two fold - in the first step error is detected, and then its location is identified.

In the error model chosen for qutrits in this paper, there are two independent bit errors ($X_1$, $X_2$) and two independent phase errors ($Z_1$, $Z_2$) and their product ($Z_{12}$). So there are three phase errors. Hence, according to Eq.(\ref{eq:model}), there can be six $Y_{mnj} = iZ_jX_{mn}$ errors. Reliable detection of these eleven errors and the error free state demands each error state and the error free state to be in different orthogonal subspaces. An $n$-qutrit quantum system resides in a $3^n$-dimensional Hilbert Space. So, in an $n$-qutrit code, the number of orthogonal subspaces required cannot be more than $3^n$. To accommodate all these eleven errors and the error free state in separate orthogonal subspaces for each of the three logical qutrits in an $n$-qutrit code, Eq.(\ref{eq:test}) should be satisfied.

\begin{equation}
    3(11n+1) \leq 3^n.
    \label{eq:test}
\end{equation}

The minimum value of $n$ for which this inequality is satisfied is five. So five qutrits are necessary for correcting a single error in a qutrit. The similar bound was achieved by Laflamme et al \cite{PhysRevLett.77.198} for qubits. Our proposed code uses nine qutrits.

\section{Conclusion}
In this paper, we have proposed a nine qutrit code which can correct a single binary (pairwise swap, $Z_1$ and $Z_2$ phase and their combination) or ternary (shift, $Z_{12}$ and their combination) error. This code is a repetition code, where each of the logical qutrits ($\ket{0}_L$, $\ket{1}_L$, $\ket{2}_L$) are arranged in three blocks of three qutrits each. To the best of our knowledge, this is the first error correcting code for qutrit system which takes into account both binary and ternary errors. We have also addressed the issue of correcting phase errors in ternary systems since the bit flip nature of phase errors in Hadamard basis does not translate directly in ternary QFT basis. The performance of this code is similar to the nine-qubit code proposed by Shor, however, its stabilizer structure is not. We have shown that this code is not optimal in the number of qutrits for a ternary quantum system. Further studies can be performed to find the five-qutrit error correcting code. Moreover, it is worthwhile to study whether the two step error correction procedure used in this paper can be avoided and a single stabilizer structure can be provided for error correction.
\small 
\section*{Acknowledgement}
We gratefully acknowledge fruitful discussions with Prof. Peter Shor, MIT and Prof. Guruprasad Kar, Indian Statistical Institute, Kolkata. Ritajit Majumdar would like to acknowledge DST INSPIRE Fellowship DST/INSPIRE/03/2016/002097 for supporting the research.

\bibliographystyle{apalike}
\bibliography{main}

\end{document}